\newtheorem{thm}{Theorem}[section]
\newtheorem{cor}[thm]{Corollary}
\newtheorem{lem}[thm]{Lemma}
\newtheorem{prop}[thm]{Proposition}
\theoremstyle{definition}
\newtheorem{defn}[thm]{Definition}	
\theoremstyle{remark}
\def\beq{\begin{eqnarray}}
\def\eeq{\end{eqnarray}}
\def\bsp{\begin{split}}	
\def\esp{\end{split}}
\newcommand{\be}{\begin{equation}}
\newcommand{\ee}{\end{equation}}
\newcommand{\z}{\zeta}
\newcommand{\bz}{\bar{\zeta}}
\newcommand{\ba}{\bar{a}}
\begin{document}

\title{\Large\textbf{Vacuum Plane Waves; Cartan Invariants and physical interpretation}}
\author{{\large\textbf{A. Coley$^{1}$ }, D. McNutt$^{1}$, and R. Milson$^{1}$ }
 \vspace{0.3cm} \\
$^{1}$Department of Mathematics and Statistics,\\
Dalhousie University,
Halifax, Nova Scotia,\\
Canada B3H 3J5
\vspace{0.2cm}\\
\texttt{aac@mathstat.dal.ca,ddmcnutt@dal.ca,rmilson@dal.ca } }
\date{\today}
\maketitle
\pagestyle{fancy}
\fancyhead{} 
\fancyhead[EC]{D. McNutt, R. MIlson and A. Coley}
\fancyhead[EL,OR]{\thepage}
\fancyhead[OC]{Vacuum Plane Waves}
\fancyfoot{} 

\begin{abstract} 
As an application of the Cartan invariants obtained using the Karlhede algorithm, we study a simple subclass of the PP-wave spacetimes, the gravitational plane waves. We provide an invariant classification of these spacetimes and then study a few notable subcases: the linearly polarized plane waves, the weak-field circularly polarized waves, and another class of plane waves found by imposing conditions on the set of invariants. As we study these spacetimes we relate the invariant structure (i.e., Cartan scalars) to the physical description of these spacetimes using the geodesic deviation equations relative to timelike geodesic observers.  
\end{abstract} 

\maketitle

\begin{section}{The Plane wave spacetimes and Cartan Invariants} 

The plane waves were introduced by Rosen \cite{Rosen} in 1937 to
describe wave-like solutions to the Einstein equations.  However, due to
the choice of coordinates, Rosen concluded that these metrics were
unphysical due to singularities in the metric components.  Upon further
analysis these singularities were shown to be coordinate dependent and
easily eliminated by a change in coordinates, \cite{Bondi, BPR}.  In
1961 the plane waves were shown to belong to the class of PP-wave
spacetimes\footnote{These are Petrov type N spacetimes admitting a
covariantly constant null vector, $\ell$.}  \cite{Kundt61, Witten}
describing pure radiation far from an isolated source; these were
originally studied by Brinkmann in 1925 as a subcase of all
N-dimensional Einstein spacetimes which are related by conformal
transformations with vanishing Ricci scalar\cite{Brinkmann}.  Despite
the existence of closed null geodesic curves \cite{Penrose}, these
spacetimes have been studied in classical general relativity as well as in its
generalizations \cite{Bicak, Ozvath, coley}.

In a PP-wave spacetime, all polynomial scalar invariants vanish
\cite{4DVSI}; therefore, to classify such spacetimes we need to apply
the Karlhede equivalence method \cite{4DCSI, 4DKundt}.  To start, one
chooses a canonical null tetrad where the Weyl tensor component has been
normalized (i.e., $\Psi_4 = 1$) by making a specific Lorentz spin and boost.
The first order Cartan invariants in the Karlhede algorithm arise as
components of the first and second order covariant derivatives of the
Weyl tensor ${\bf \Psi}$.  As $\Psi_4$ is constant, these additional
invariants take the form of the spin-coefficient $\alpha$, $\gamma$, and
their conjugates $\bar{\alpha}, \bar{\gamma}$, which are introduced at
first order as components of the covariant derivative of the Weyl tensor
\cite{milson}.

To study the plane waves, we assume $\alpha =0$.  We add a subscript $c$
to indicate the fact that the non-zero spin coefficients relative to
this class of canonical coframes (in which $\Psi_4 =1$) are Cartan
invariants as well.  Following the analysis in \cite{milson}, at second
order the plane waves offer only one new Cartan invariant, $\Delta
\gamma_c$, since all of the remaining second order invariants vanish; i.e.,
$\mu_c = \nu_c = \bar{\delta} \alpha_c = \delta \gamma_c = D \gamma_c =
0$, with $\Delta \gamma_c$ potentially non-zero.  Unlike those PP-wave
spacetimes with $\alpha \neq 0$, it is not possible to produce an invariant coframe
in the plane wave spacetimes as the Riemann tensor and its covariant
derivatives to all orders are invariant under null rotations about
$\ell$.  Thus the coframe we use to produce the Cartan invariants is not
unique, since any null rotation about $\ell$ produces a new coframe with
$\gamma_c$ and $\Delta \gamma_c$ unchanged.  Despite the lack of
invariant coframe, $\Delta \gamma_c$ is we ll defined due to this
invariance under null rotations.

From these facts we have a helpful proposition to classify the plane wave spacetimes:
\begin{prop} \label{prop:gammaInv}
A plane wave spacetime may be  locally described in an invariant manner using the triplet of Cartan invariants $\{\gamma_c, \bar{\gamma}_c, \Delta\gamma_c \}$, where the remaining two invariants are expressed in terms of $\gamma_c$ and hence do not change in any coordinate system. \end{prop}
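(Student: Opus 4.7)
The plan is to establish Proposition \ref{prop:gammaInv} by applying the Cartan--Karlhede algorithm and verifying both the invariance of the listed scalars under the residual frame freedom and the termination of the algorithm at second order.

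First, I would pin down the residual frame freedom. The canonical coframe condition $\Psi_4 = 1$ uses up the boost and spin parameters, while the plane-wave condition $\alpha = 0$ combined with the standard PP-wave properties leaves only null rotations about $\ell$ available. As stated in the discussion preceding the proposition, the Riemann tensor and its covariant derivatives to all orders are invariant under these rotations, so any scalar extracted from $\nabla^{(k)}\mathrm{Riem}$ is automatically a well-defined function on the spacetime, independent of the specific coframe chosen within the canonical family.

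Next, I would verify directly that $\gamma_c$, $\bar\gamma_c$, and $\Delta\gamma_c$ are invariant under the residual null rotations about $\ell$. Using the standard transformation rules for the spin coefficients, the variation of $\gamma$ under such a rotation involves terms proportional to $\alpha$, $\beta$, $\mu$ and $\tau$, all of which vanish for the plane wave class; hence $\gamma_c$ is unchanged. The invariance of $\Delta\gamma_c$ then follows from the invariance of $\gamma_c$ combined with the fact that the vector $n^a$ entering $\Delta$ is well-defined modulo additive multiples of $\ell^a$, together with $D\gamma_c = 0$ from the earlier analysis. After this, I would invoke the results recalled from \cite{milson} stating that at second order every other candidate invariant, namely $\mu_c$, $\nu_c$, $\bar\delta\alpha_c$, $\delta\gamma_c$ and $D\gamma_c$, vanishes identically, which reduces the pool of genuinely new second order Cartan scalars to $\Delta\gamma_c$ alone.

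The main obstacle is establishing termination of the Karlhede algorithm, i.e., showing that no genuinely new functionally independent invariants arise at third or higher order. This amounts to using the Newman--Penrose equations, the Bianchi identities, and the commutators of the directional derivatives $D, \Delta, \delta, \bar\delta$ to express all iterated derivatives of the non-vanishing scalars as functions of $\gamma_c$, $\bar\gamma_c$ and $\Delta\gamma_c$. Concretely, one must check that $D$, $\delta$ and $\bar\delta$ annihilate each of these three scalars (so that only $\Delta$ produces potentially new output) and then verify that $\Delta^k \gamma_c$ introduces no new functional dependence beyond what is already parameterised by the triplet. Once this closure is established, the standard Cartan--Karlhede local equivalence theorem guarantees that $\{\gamma_c,\bar\gamma_c,\Delta\gamma_c\}$ provides a complete invariant local characterisation of the plane wave spacetime, which is precisely the content of the proposition.
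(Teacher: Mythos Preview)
Your outline is essentially the argument the paper relies on: the proposition is stated without a separate proof, as a direct consequence of the preceding discussion (normalising $\Psi_4=1$, residual null-rotation isotropy, vanishing of $\mu_c,\nu_c,\bar\delta\alpha_c,\delta\gamma_c,D\gamma_c$ at second order from \cite{milson}), so your steps on frame freedom and null-rotation invariance of $\gamma_c,\Delta\gamma_c$ match the paper's reasoning.

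The one place your plan overshoots is the termination step. You phrase it as needing to show that \emph{every} $\Delta^k\gamma_c$ is a function of the triplet, via NP equations, Bianchi identities, and commutators. That is not what the Karlhede criterion asks: the algorithm stops as soon as, in passing from order $q$ to order $q+1$, neither the number of functionally independent invariants increases nor the isotropy group shrinks. Here the vanishing of $D\gamma_c,\delta\gamma_c,\bar\delta\gamma_c$ already forces $\gamma_c$ (and hence $\bar\gamma_c$ and $\Delta\gamma_c$) to be functions of a single variable, so the count of functionally independent invariants is at most one at both first and second order, and the null-rotation isotropy persists. Termination at second order is then immediate; no higher $\Delta^k$ need be examined. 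The paper makes exactly this ``function of one variable'' observation in the paragraph immediately following the proposition, and that is the whole termination argument.
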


\noindent In the case where $\gamma_c$ is non-zero and constant, then it may be shown that these are the $G_6$ spacetimes given in Table \ref{table:ppcase2App}. If $\gamma_c$ is non-constant, the vanishing of $\delta \gamma_c, \bar{\delta} \gamma_c$ and $D \gamma_c$ implies  $\gamma_c$ is at best a function of one variable; instead of the complex-valued function $\gamma_c$ we may take the real or imaginary component of $\gamma_c$ as the functionally independent invariant and the classifying invariants $\bar{\gamma}_c$ and $\Delta \gamma_c$ are replaced with real valued functions.  

To see this, consider the Brinkmann coordinates for the vacuum plane wave metric \cite{Podolsky},
 \beq -2du dv - 2H(u)du^2 + 2d\z d \bz,~~H(u,\z,\bz) = Re(f(\z,u)). \label{KundtPlaneWave} \eeq
\noindent In this coordinate system, relative to the frame in which $\Psi_4 =1$, the vanishing spin coefficient $\alpha_c$ implies
\beq \alpha_c = e^{a-\ba} \ba_{,\bz} = 0, \label{AlphaBrinkmann} \eeq  
\noindent where $a = \frac14 ln H_{,\z\z}$; clearly $\ba_{,\bz} = 0$, so that $\bar{f}_{,\bz \bz \bz} = 0$, giving a solution of the form \beq f(\z,u) = A(u) \z^2. \label{PWBrinkmanf} \eeq
\noindent Expressing $\gamma_c$ in these coordinates: 
\beq \gamma_c = \frac{1}{4 \sqrt{{A\bar{A}}}} ln(\bar{A})_{,u},  \label{PWBrinkmanGamma} \eeq
\noindent where a particular coordinate system has been used. However, 
regardless of which coordinate system is used, we may calculate $\gamma_c$ 
in the canonical coframe, as only Lorentz transformations are used.  
Supposing that $\gamma_c = \gamma_1 + i \gamma_2$ with $A = r(u)e^{i\theta(u)}$, we may solve for $A$ in Brinkmann coordinates.
\begin{lem} \label{lem:G5App}
For any PP-wave spacetime expressed in terms of a canonical coframe with $\alpha_c = 0$ and $\gamma_c = \gamma_1 + i \gamma_2~~\neq 0$, we may express the canonical form for $f(\z,u)$ as 
\beq &A= re^{i\theta};~~~ r(u) = [C_0 - \int 4\gamma_1 du]^{-1},~~\theta(u) = - \int 4 r \gamma_2 du+C_1,~~~C_0,C_1 \in \mathbb{R}.& \label{1G5App} \eeq
\end{lem}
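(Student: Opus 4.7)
The plan is to prove the lemma by direct substitution into the expression \eqref{PWBrinkmanGamma} for $\gamma_c$ and then separating real and imaginary parts to obtain two decoupled ODEs for $r(u)$ and $\theta(u)$, each of which can be solved by a single quadrature.

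First I would write $A = re^{i\theta}$, so that $\bar{A} = re^{-i\theta}$, $A\bar{A}=r^2$, and $\sqrt{A\bar{A}}=r$ (choosing the positive branch, which is justified since $r$ is a modulus). I would then compute $\ln(\bar{A}) = \ln r - i\theta$, which upon differentiation with respect to $u$ gives $\ln(\bar{A})_{,u} = r_{,u}/r - i\theta_{,u}$. Substituting into \eqref{PWBrinkmanGamma} yields
\be
\gamma_c \;=\; \frac{1}{4r}\left(\frac{r_{,u}}{r} - i\theta_{,u}\right) \;=\; \frac{r_{,u}}{4r^{2}} \;-\; i\,\frac{\theta_{,u}}{4r}.
\ee

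Next I would equate this with $\gamma_1 + i\gamma_2$ and separate real and imaginary parts to get the two first-order ODEs
\be
\gamma_1 \;=\; \frac{r_{,u}}{4r^{2}},\qquad \gamma_2 \;=\; -\,\frac{\theta_{,u}}{4r}.
\ee
The first equation is just $(1/r)_{,u} = -4\gamma_1$, which integrates immediately to $1/r = C_0 - \int 4\gamma_1\,du$, giving the stated formula for $r(u)$. With $r(u)$ in hand, the second equation reads $\theta_{,u} = -4r\gamma_2$, and a single integration produces $\theta(u) = -\int 4r\gamma_2\,du + C_1$, as claimed. The real constants $C_0, C_1$ are integration constants.

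There is no serious obstacle in this argument; it is essentially a direct verification. The only small subtleties to flag are (i) the choice of positive square root in $\sqrt{A\bar{A}}=r$, which is consistent with the canonical coframe normalization $\Psi_4=1$ used to derive \eqref{PWBrinkmanGamma}; and (ii) the fact that $r$ and $\theta$ are well-defined and smooth away from zeros of $A$, which is guaranteed by the hypothesis $\gamma_c \neq 0$ (since $\gamma_c$ is built from $A$ and its derivatives, $A$ must be non-vanishing on the domain considered). Once these are noted, the identifications of $r(u)$ and $\theta(u)$ are forced by the ODEs above.
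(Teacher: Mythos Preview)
Your proof is correct and follows exactly the approach the paper indicates just before the lemma: write $A=r e^{i\theta}$, substitute into \eqref{PWBrinkmanGamma}, separate real and imaginary parts, and integrate the resulting first-order ODEs. The paper does not spell out these routine steps, so your argument simply supplies the omitted details.
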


Here, $\gamma_c$ gives rise to the only functionally independent invariant and the essential classifying functions are $\bar{\gamma}_c(u)$ and $\Delta \gamma_c (u)$ expressed in terms of $\gamma_1$. If $\gamma_c$ is constant, there are two possibilities for $A(u)$ depending on where $\gamma_c$ lies in the complex plane, which are given in table \ref{table:ppcase2App}.

\begin{table}[H] 
\begin{center} 
\begin{tabular}{c|c c c }
\hline 
 & $f(\z,u)$ & $1$ & $2$   \\ [0.5ex]
\hline \\
$G_5$ & $\frac{A(u)}{2} \z^2$, \eqref{PWBrinkmanf}  & $\gamma_c$; \eqref{1G5App} & $\Delta \gamma_c $ \\ [1ex]
\hline \\
$G_6$-a & $\frac{u^{\frac{iC_1}{C_0} -1}}{16C_0^2}\z^2$ & $\gamma_c= C_0 + i C_1$ &  \\ [1ex]
$G_6$-b & $e^{iC_1u} \z^2$ & $\gamma_c = i\frac{C_1}{4}$ & \\ [1ex]
\hline \\
\end{tabular}
\caption{ Summary of cases with $\alpha_c = 0$. $C_0,C_1 \in \mathbb{R}$, and $A(u)$ is a complex valued function. }
\label{table:ppcase2App}
\end{center}
\end{table}

As an example, we provide an invariant description for the class of spacetimes for which all timelike geodesic observers produce a linear polarization in terms of the equations of geodesic deviation \cite{Podolsky}, which will be discussed in the following section. For now we make the following definition in terms of Cartan invariants:
\begin{defn} \label{dfn:glinPolar}
A vacuum plane wave spacetime is {\it linearly polarized} when the Cartan invariant, $\gamma_c$, is real-valued. 
\end{defn}
\noindent Applying Lemma \eqref{lem:G5App} we find a particular form for the linearly polarized waves: 
\begin{cor} \label{cor:glinPolar}
 Given a vacuum plane wave spacetime, relative to the class of canonical coframes where $\Psi_4 = 1$ suppose $\bar{\gamma}_c = \gamma_c$. The metric expressed in Brinkmann coordinates has $A$ real-valued and \beq &A = [C_0 - \int 4\gamma_c du]^{-1}.& \nonumber \eeq 
\end{cor}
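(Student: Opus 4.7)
The plan is to apply Lemma~\ref{lem:G5App} directly. Writing $\gamma_c = \gamma_1 + i\gamma_2$, the linear polarization hypothesis $\bar{\gamma}_c = \gamma_c$ is exactly the condition $\gamma_2 \equiv 0$, so that $\gamma_c = \gamma_1$ is real. Substituting $\gamma_2 = 0$ into the parametrization
\[
A = r e^{i\theta},\qquad r(u) = \Big[C_0 - \int 4\gamma_1\,du\Big]^{-1},\qquad \theta(u) = -\int 4 r \gamma_2\,du + C_1,
\]
immediately collapses the phase to the constant $\theta \equiv C_1$ and yields the claimed magnitude $r(u) = [C_0 - \int 4\gamma_c\,du]^{-1}$.

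To conclude that $A$ itself is real-valued, I would absorb the remaining phase factor $e^{iC_1}$ by a constant rotation $\z \mapsto e^{-iC_1/2}\z$ in the Brinkmann chart. This rotation leaves the line element \eqref{KundtPlaneWave} invariant (since $d\z\,d\bz$ is unchanged), while it sends $f(\z,u) = A(u)\z^2$ to $A(u)e^{-iC_1}\z^2 = r(u)\z^2$, so that $A$ becomes real-valued. The canonical coframe, recomputed in the rotated coordinates, is related to the old one by a spin of the $m$-leg, under which $\gamma_c$ and $\Delta\gamma_c$ are invariant by the discussion preceding Proposition~\ref{prop:gammaInv}.

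The only bookkeeping point I expect to be mildly subtle is verifying that the spin induced on $m$ by the coordinate rotation is consistent with the normalization $\Psi_4 = 1$; either the phase is of a form that preserves this condition outright, or one combines the $\z$-rotation with a compensating tetrad spin to restore it, both of which leave $\gamma_c$, $\bar\gamma_c$ and $\Delta\gamma_c$ unaffected. With this gauge fixing in place, the formula $A = [C_0 - \int 4\gamma_c\,du]^{-1}$ follows at once from the reduced form of Lemma~\ref{lem:G5App}.
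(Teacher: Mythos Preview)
Your proposal is correct and follows exactly the approach the paper intends: the corollary is stated immediately after the phrase ``Applying Lemma~\ref{lem:G5App}'' with no further argument, so the paper's own proof is just the specialization $\gamma_2=0$ in \eqref{1G5App} that you carry out. Your extra step of rotating away the residual constant phase $e^{iC_1}$ is fine, though one can equivalently simply take $C_1=0$ since it is a free integration constant in Lemma~\ref{lem:G5App}.
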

\noindent We will say the plane wave is $+$ linearly polarized. Making a  rotation in the spatial coordinates $\z' = e^{i\pi/4}\z$ (equivalently a spin in the transverse plane), the metric function is multiplied by $i$ so that $\Psi_4 = i$, and we say this is $\times$ linearly polarized. For more general polarization states the triplet of invariants $\{ \gamma_c(u), \bar{\gamma}_c(u), \Delta \gamma_c (u) \}$ describes how the $+$ and $\times$ polarization states mix.
\end{section} 
\begin{section}{The Equations of Geodesic Deviation - Polarization modes for all PP-wave spacetimes } \label{GDsec}

To study the polarization modes of gravitational waves in vacuum
spacetimes with a cosmological constant, a particular null tetrad is
introduced relative to the Brinkmann coordinates.  This null tetrad
arises in the choice of an orthonormal frame in which the equations of
geodesic deviation,
\beq \frac{d^2 Z^{\mu}}{d \tau^2} = \ddot{Z}^{\mu} = -R^{\mu}_{~\alpha \beta \gamma}u^{\alpha}Z^{\beta}u^{\gamma},\label{GDequations} \eeq 
\noindent 
take on a simpler form, where $\dot{{\bf x}} = d {\bf x}/d\tau$, $|{\bf x}|^2 = -1$ is the four velocity of a timelike geodesic curve corresponding to a free test particle, $\tau$ is the proper time along this curve and $Z(\tau)$ is a displacement vector perpendicular to $\dot{{\bf x}}$. 

To construct the desired null tetrad, we first produce an orthogonal frame with $\dot{{\bf x}} = e_1$ and the remaining vectors $\{ e_2,e_3,e_4 \}$ from the local hypersurface orthogonal to $e_1$ (so that $<e_a, e_b> = g_{\alpha}{\beta}e^{\alpha}_a e^{\beta}_b = \eta_{ab}$). The dual basis will be $e^1 = -\dot{{\bf x}}$ and $e^i = e_i,i=2,3,4$. This will hold at a point along the timelike geodesic $x^{\mu}(\tau)$. If we wish to have this hold on the entire curve the coframe must be parallely transported along the curve, yielding further conditions on the components of the metric and the four-velocity $\dot{{\bf x}}(\tau)$.

Choosing Brinkmann coordinates, so that the metric takes the form
\eqref{KundtPlaneWave}, the plane waves are further constrained as the
analytic function must be of the form, $f(\z,u) = A(u) \z^2$.  These
solutions admit an isometry group of dimension five in general and dimension six
if and only if $\gamma$ is constant.

The PP-wave spacetimes belong to the subclass of $KN(\Lambda)[\alpha', \beta']$ \cite{Ozvath} with $\Lambda = 0$ and where the arbitrary functions $\alpha'$ and $\beta'$ may be set to $\alpha' =1$, $\beta' = 0$ via an appropriate coordinate transform preserving the metric form. We project the geodesic deviation equations onto this orthonormal frame in the case $\Lambda = 0$ and $\Psi_4 \neq 0$:
\beq \ddot{Z}^1 = 0,~~ \ddot{Z}^{2} = - A_{+} Z^2 + A_{\times} Z^3,~~\ddot{Z}^{3} = A_{+} Z^3 + A_{\times} Z^2,~~\ddot{Z}^{4} = 0 \label{GDortho} \eeq
\noindent Where the dot above a function denotes differentiation with respect to 
the proper time $\tau$ along the geodesic and \beq  A_{+} \equiv \frac14 (\Psi_4 + \bar{\Psi}_4),~~A_{\times} \equiv \frac{i}{4} (\bar{\Psi}_4 - \Psi_4). \label{ReImP4} \eeq 
\noindent Using the null tetrad,  
\beq m_i = \frac{1}{\sqrt{2}} (e_2 + ie_3),~~n_i = \frac{1}{\sqrt{2}}(e_1-e_4), ~~\ell_i = \frac{1}{\sqrt{2}}(e_1+e_4), \label{InterpTet} \eeq
\noindent and denoting $Z = z^0 \ell_i + z^1 n_i + z^2 m_i + z^3 \bar{m}_i$ with $\bar{z}^2 = z^3$, the equations of geodesic deviation become:  
\beq & \ddot{z}^0 = 0, \ddot{z}^1 = 0,& \nonumber \\
 & \ddot{z}^{2}  = -(A_{+} + i A_{\times}) z^3 = - \frac12 \Psi_4 z^3.& \label{GDnull} \eeq

To determine the form of the null tetrad \eqref{InterpTet} for the PP-wave spacetimes 
we must choose $\ell$ to be the preffered null direction along which the Weyl tensor 
has one non-vanishing component, $\Psi_4$. With this null direction, we can prove the following proposition \cite{Podolsky}
\begin{prop} \label{prop:1}
 Let $\dot{{\bf x}}$ be the four velocity of a timelike geodesic, and $\ell$ some null vector. Then there exists a unit spacelike vector $e_4$ which is the projection of the null direction given by $\ell$ into the hypersurface orthogonal to $\dot{{\bf x}}$. This spatial vector is unique (up to reflections) and is given by $e_4 = -\dot{{\bf x}} + \sqrt{2} \ell$, where $<\ell,\dot{{\bf x}}> = -\frac{1}{\sqrt{2}}$. Another null vector $n$ in the ($\dot{{\bf x}}, e_4$) plane such that $<\ell,n> = -1$ is then given by $n = \sqrt{2}\dot{{\bf x}} - \ell$. The only remaining freedom are rotations in the $(e_2,e_3)$ plane.   
\end{prop}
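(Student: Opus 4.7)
The plan is to work entirely in the tangent space at a point along the geodesic, treated as Minkowski space with signature $(-,+,+,+)$, and carry out a direct algebraic computation. The only inputs are $\langle \dot{\mathbf{x}},\dot{\mathbf{x}}\rangle = -1$, $\langle \ell,\ell\rangle = 0$, and the null rescaling freedom $\ell \mapsto \lambda\ell$. My first step is to use this rescaling to normalize $\ell$ so that $\langle \ell,\dot{\mathbf{x}}\rangle = -1/\sqrt{2}$; this is the unique positive rescaling that will make the projection come out unit length, and it singles out one specific vector on the null ray.

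Since $\dot{\mathbf{x}}$ is timelike with $\langle \dot{\mathbf{x}},\dot{\mathbf{x}}\rangle = -1$, the orthogonal projector onto its spatial hyperplane is $P(v) = v + \langle v,\dot{\mathbf{x}}\rangle \dot{\mathbf{x}}$, so $P(\ell) = \ell - (1/\sqrt{2})\dot{\mathbf{x}}$. A one-line expansion gives $\langle P(\ell), P(\ell)\rangle = 0 - \sqrt{2}\langle \ell,\dot{\mathbf{x}}\rangle + (1/2)\langle \dot{\mathbf{x}},\dot{\mathbf{x}}\rangle = 1/2$, so the unit spacelike vector in that direction is $e_4 = \sqrt{2}\, P(\ell) = -\dot{\mathbf{x}} + \sqrt{2}\,\ell$. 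Uniqueness up to reflections then follows: once $\ell$ is fixed on its null ray, the direction of $P(\ell)$ is determined, and the only choice left is the sign of the unit vector, i.e.\ $e_4 \mapsto -e_4$, which corresponds to reversing $\ell \mapsto -\ell$.

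For $n$, I would write the ansatz $n = a\dot{\mathbf{x}} + b\,e_4$, which forces $n$ to lie in the required 2-plane, and impose the two conditions $\langle n,n\rangle = 0$ and $\langle \ell,n\rangle = -1$. The null condition gives $-a^2 + b^2 = 0$, i.e.\ $a = \pm b$. Using $\ell = (\dot{\mathbf{x}} + e_4)/\sqrt{2}$ (inverted from the formula for $e_4$) to evaluate $\langle \ell,n\rangle = (b-a)/\sqrt{2}$, the second condition becomes $b - a = -\sqrt{2}$. The combination $a = b$ is inconsistent, and $a = -b$ yields $a = 1/\sqrt{2}$, $b = -1/\sqrt{2}$, which rewrites as $n = \sqrt{2}\,\dot{\mathbf{x}} - \ell$. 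A quick verification confirms $\langle n,n\rangle = -2 + 2 = 0$ and $\langle \ell,n\rangle = -1$.

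Finally, the statement about residual freedom: once $\dot{\mathbf{x}}$ and $e_4$ are fixed, any orthonormal completion $\{e_2,e_3\}$ must span the spacelike 2-plane orthogonal to them, and the stabilizer of such a plane in $O(3,1)$ is the group of $O(2)$ rotations and reflections of that plane. This is immediate from the orthonormality relations and closes the proof. The main obstacle is purely bookkeeping with signs and signature conventions; no step involves any genuine difficulty beyond careful computation, which is why the argument is essentially a three-line verification once the correct normalization of $\ell$ is fixed at the outset.
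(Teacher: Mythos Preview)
Your argument is correct: the projection computation, the normalization of $\ell$, the determination of $e_4$, the solution for $n$, and the residual $O(2)$ freedom are all handled cleanly and the arithmetic checks out. Note, however, that the paper does not actually supply a proof of this proposition; it simply states the result and cites \cite{Podolsky}. Your direct tangent-space computation is exactly the kind of elementary verification one would expect, and it fills in what the paper leaves to the reference. The only cosmetic point is your remark that the reflection $e_4\mapsto -e_4$ ``corresponds to reversing $\ell\mapsto -\ell$'': more precisely, the null \emph{direction} (the unoriented line through $\ell$) projects to an unoriented line in the spatial hyperplane, and the sign choice for the unit vector along that line is the reflection ambiguity; your chosen normalization $\langle \ell,\dot{\mathbf{x}}\rangle=-1/\sqrt{2}$ then selects one of the two. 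This is a wording issue, not a mathematical one.
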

\noindent In Brinkmann coordinates, $(\z,\bz,u,v)$, we have two more propositions \cite{Podolsky}, the first of which gives the form of the null tetrad:
\begin{prop} \label{prop:2}
 In Brinkmann coordinates, the null tetrad tied to the 4-velocity of the geodesic, $\dot{{\bf x}} = (\dot{\z},\dot{\bz},\dot{u}, \dot{v})$, takes the simple form
\beq & m_i^{\mu} = \left(- \frac{\dot{\z}}{\dot{u}},0,-1,0 \right),~ \bar{m}_i^{\mu} = \left( -\frac{\dot{\z}}{\dot{u}},-1,0,0 \right), & \nonumber \\
 & \ell_i^{\mu} = \left( \frac{1}{\sqrt{2} \dot{u}},0,0,0 \right),~ n_i^{\mu} = \left( \sqrt{2}\dot{v}-\frac{1}{\sqrt{2} \dot{u}},\sqrt{2}\dot{\z}, \sqrt{2}\dot{\bar{z}}, \sqrt{2} \dot{u}  \right). & \label{InterpTet0} \eeq
\noindent where the function $H$ in the metric \eqref{KundtPlaneWave} is hidden due to the identity 
\beq & 2\dot{\z} \dot{\bz} - 2 \dot{u} \dot{v} - H \dot{u}^2 = -1.& \nonumber \eeq
\noindent {\bf Remark}: The null vector $\ell$ is no longer a covariantly constant null vector, as $\nabla_{\frac{\partial}{\partial x^{\mu}}} \ell_i = - \dot{u} \left(\frac{1}{\dot{u}} \right)_{,\mu} \ell_i$; however, it is a recurrent null vector and there is a covariant constant null vector proportional to the original Killing vector $\ell$.    
\end{prop}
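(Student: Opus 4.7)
The plan is to construct the null tetrad explicitly in Brinkmann coordinates by identifying $\ell$ with the covariantly constant null direction of the PP-wave, applying the normalizations prescribed by Proposition \ref{prop:1}, and then fixing $m$ through the orthogonality and normalization conditions of a null tetrad.

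First I would exploit the fact that the PP-wave metric \eqref{KundtPlaneWave} admits the covariantly constant null vector $\partial_v$. Writing $\ell_i = \lambda\,\partial_v$ for some scalar $\lambda$ to be determined, the inner product with $\dot{\bf x}$ reduces, using $g_{uv}=-1$, to $\langle \ell_i,\dot{\bf x}\rangle = -\lambda\dot u$. Equating this to $-1/\sqrt{2}$ as demanded by Proposition \ref{prop:1} fixes $\lambda = 1/(\sqrt{2}\dot u)$, reproducing the stated form of $\ell_i^\mu$. The second member $n_i = \sqrt{2}\dot{\bf x} - \ell_i$ follows immediately from Proposition \ref{prop:1}; substituting $\dot{\bf x}=(\dot\zeta,\dot{\bar\zeta},\dot u,\dot v)$ and the expression for $\ell_i$ just obtained gives the claimed $n_i^\mu$.

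Next I would construct $m_i$ by parametrising a general complex vector $m_i = m^\zeta\partial_\zeta + m^{\bar\zeta}\partial_{\bar\zeta} + m^u\partial_u + m^v\partial_v$ and imposing the four conditions characterising a null tetrad consistent with \eqref{InterpTet}: $\langle m_i,\ell_i\rangle = \langle m_i,n_i\rangle = \langle m_i,m_i\rangle = 0$ and $\langle m_i,\bar m_i\rangle = 1$. Orthogonality to $\ell_i$ immediately forces $m^u = 0$, placing $m_i$ in the hypersurface orthogonal to the recurrent direction. The remaining conditions, combined with the residual $(e_2,e_3)$ rotational freedom noted in Proposition \ref{prop:1}, are then enough to pin down a canonical representative: choosing $m^\zeta = 0$ forces $m^{\bar\zeta} = -1$ via the $\langle m_i,\bar m_i\rangle = 1$ normalisation and $m^v = -\dot\zeta/\dot u$ via orthogonality to $n_i$, matching \eqref{InterpTet0}. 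The identity $2\dot\zeta\dot{\bar\zeta} - 2\dot u\dot v - H\dot u^2 = -1$ is then just the statement $g_{\mu\nu}\dot x^\mu\dot x^\nu = -1$ written out in Brinkmann coordinates.

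For the \textbf{Remark} I would compute $\nabla_{\partial_\mu}\ell_i = \nabla_{\partial_\mu}(\lambda\,\partial_v) = (\partial_\mu\lambda)\partial_v + \lambda\nabla_{\partial_\mu}\partial_v$; the second term vanishes because $\partial_v$ is covariantly constant on the PP-wave, leaving $\nabla_{\partial_\mu}\ell_i = (\partial_\mu\ln\lambda)\ell_i$. Since $\lambda^{-1} = \sqrt{2}\dot u(\tau)$, this reproduces the recurrence coefficient $-\dot u\,\partial_\mu(1/\dot u)$ stated in the proposition, and the rescaling $\sqrt{2}\dot u\,\ell_i = \partial_v$ recovers the original Killing vector.

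The main obstacle is the bookkeeping in the construction of $m_i$. Although each of the four conditions is either linear or quadratic, the Brinkmann cross terms $g_{uv}=-1$ and $g_{uu}=-2H$ make the orthogonality to $n_i$ mix all four components of $m_i$, so one must track carefully which component is fixed by which condition and check that the choice of transverse gauge is consistent with the stated ansatz $(m^\zeta,m^{\bar\zeta},m^u,m^v) = (0,-1,0,-\dot\zeta/\dot u)$. Modulo this algebra, no conceptual obstruction arises: the uniqueness of the covariantly constant null direction makes $\ell_i$ essentially forced, and everything else follows from the prescription of Proposition \ref{prop:1}.
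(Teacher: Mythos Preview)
Your construction is correct. The paper itself does not give a proof of this proposition: it is stated with a citation to \cite{Podolsky}, where the tetrad arises as the $\Lambda=0$, $\alpha'=1$, $\beta'=0$ specialisation of the general $KN(\Lambda)[\alpha',\beta']$ interpretation frame. The closest the paper comes to an argument is the explicit Lorentz transformation \eqref{ntot} relating \eqref{InterpTet0} back to the standard Kundt tetrad \eqref{KundtTet}, which serves as an \emph{a posteriori} consistency check rather than a derivation. Your direct route through Proposition~\ref{prop:1}---fixing $\ell_i\propto\partial_v$ by the normalisation $\langle\ell_i,\dot{\bf x}\rangle=-1/\sqrt{2}$, reading off $n_i=\sqrt{2}\,\dot{\bf x}-\ell_i$, and then solving the null-tetrad orthogonality conditions for $m_i$---is therefore more self-contained than anything the paper offers, and is essentially what the cited reference does in the general $KN(\Lambda)$ setting before specialising.
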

\noindent Of course, for an arbitrary unit timeike geodesic, 
$\dot{{\bf x}} = (\dot{\z}, \dot{\bar{\z}}, \dot{u}, \dot{v})$, we may reconstruct the usual metric coframe 
\beq & m_n^{\mu} = (1,0,0,0),~~ \bar{m}_n^{\mu} = (0,1,0,0),~~ \ell_n^{\mu} = (0,0,0,1),~~n_n^{\mu} = (0,0,1,-H) & \label{KundtTet} \eeq
\noindent from the interpretation tetrad $\{ m_i, \bar{m}_i, \ell_i, n_i \}$ in \eqref{InterpTet0} by applying the following Lorentz transformation
\beq & \ell_n = A\ell_i,~~n_n = A^{-1}(\ell_i + Be^{iV} \bar{m}_i + \bar{B} e^{-i\mathfrak{V}} m_i + B \bar{B} n_i),~m_n = e^{-i\mathfrak{V}}m_i +B \ell_i & \nonumber \\
& A = \sqrt{2} \dot{u},~~B =-\sqrt{2} \dot{\z},~~\mathfrak{V} = \pi & \label{ntot} \eeq
To relate this to a physical description one must have a tetrad that
will be defined on all points along the timelike geodesic curve.  In the
more general $KN(\Lambda)[\alpha', \beta']$ class this requirement
imposes further differential constraints on the metric functions.
Fortunately, in the case of the PP-waves these constraints are trivial
\begin{prop} \label{prop:3} 
 For any timelike geodesic $x^{\mu}(\tau) = (\z,\bz,u,v)$ in a PP-wave spacetime, the null tetrad given by \eqref{InterpTet0}  is parallely transported along this geodesic.
\end{prop}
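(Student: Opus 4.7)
The plan is to verify the parallel transport equation $\dot{V}^\mu + \Gamma^\mu_{\alpha\beta}\dot{x}^\alpha V^\beta = 0$ directly, component by component, for each of the four vectors in \eqref{InterpTet0}. This requires only the Christoffel symbols of the Brinkmann metric \eqref{KundtPlaneWave} and the geodesic equations for $\dot{x}^\mu$.

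First I would list the non-vanishing Christoffel symbols. Because only $g_{uu}=-2H$ carries coordinate dependence, the only non-zero symbols are
\[
\Gamma^v_{uu}=H_{,u},\quad \Gamma^v_{u\z}=H_{,\z},\quad \Gamma^v_{u\bz}=H_{,\bz},\quad \Gamma^\z_{uu}=H_{,\bz},\quad \Gamma^{\bz}_{uu}=H_{,\z}.
\]
In particular $\Gamma^u_{\alpha\beta}\equiv 0$, so the $u$-component of the geodesic equation forces $\ddot{u}=0$: the quantity $\dot{u}$ is conserved along every geodesic (the Noether charge associated with the covariantly constant null Killing vector $\partial_v$). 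The remaining geodesic equations give $\ddot{\z}=-H_{,\bz}\dot{u}^2$, $\ddot{\bz}=-H_{,\z}\dot{u}^2$, and $\ddot{v}=-H_{,u}\dot{u}^2-2H_{,\z}\dot{u}\dot{\z}-2H_{,\bz}\dot{u}\dot{\bz}$.

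With these inputs in hand I would check the four vectors in turn. For $\ell_i$ the only non-zero slot is $\ell_i^v=1/(\sqrt{2}\dot{u})$; no $\Gamma$ has $v$ as a lower index, so the Christoffel term vanishes identically, and the $\tau$-derivative vanishes because $\ddot{u}=0$. For $n_i$ the $u$-slot is $\sqrt{2}\dot{u}$, constant by $\ddot{u}=0$ and receiving no Christoffel contribution; the $\z$- and $\bz$-slots give $\sqrt{2}\ddot{\z}$ and $\sqrt{2}\ddot{\bz}$, each cancelled by the single contraction $\Gamma^\z_{uu}\dot{u}\,n_i^u$ (and its conjugate); and the $v$-slot collects $\Gamma^v_{uu}\dot{u}\,n_i^u+2\Gamma^v_{u\z}\dot{u}\,n_i^\z+2\Gamma^v_{u\bz}\dot{u}\,n_i^{\bz}$, which exactly matches $-\sqrt{2}\ddot{v}$ and cancels the derivative of $\sqrt{2}\dot{v}-1/(\sqrt{2}\dot{u})$. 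For $m_i$ the derivative $-\ddot{\z}/\dot{u}=H_{,\bz}\dot{u}$ of the $v$-slot is balanced by $\Gamma^v_{u\bz}\dot{u}\,m_i^{\bz}=-H_{,\bz}\dot{u}$; the other slots have vanishing derivative and no Christoffel contribution. The vector $\bar m_i$ is handled by complex conjugation.

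The argument presents no genuine obstacle, only bookkeeping. The points to keep straight are the index ordering $(v,\z,\bz,u)$ implicit in \eqref{InterpTet0} (so that $\ell_i$ is purely $v$-directed and the null pairings $\ell_i\!\cdot\! n_i=-1$, $m_i\!\cdot\!\bar m_i=1$ come out correctly), the factor of two produced by symmetric Christoffel contractions such as $\Gamma^v_{u\z}\dot{x}^\alpha n_i^\beta$, and the $\ddot{u}/\dot{u}^2$ term arising when one differentiates $1/\dot{u}$ in $\ell_i^v$ and in $n_i^v$, which is killed precisely by $\ddot{u}=0$. Once these are handled, each parallel transport equation collapses to the corresponding geodesic equation and is satisfied identically, which is ultimately why the PP-wave case is so much simpler than the general $KN(\Lambda)[\alpha',\beta']$ setting alluded to before the statement.
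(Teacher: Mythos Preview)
Your computation is correct: the Christoffel symbols, the geodesic equations (in particular $\ddot u=0$), and the component-by-component checks all go through exactly as you describe, and the coordinate ordering $(v,\z,\bz,u)$ is the right reading of \eqref{InterpTet0}.

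The paper, however, does not argue this way. Its proof is a one-line specialization of a general result quoted from \cite{Podolsky} for the $KN(\Lambda)[\alpha',\beta']$ class: there the parallel-transport conditions on the interpretation tetrad reduce to $(q/p)_{,\z}=(q/p)_{,\bz}=0$ together with an ODE for the spin phase $\mathfrak V$, and in Brinkmann coordinates for a PP-wave one has $p=q=1$, so both conditions are trivially met and $\mathfrak V$ is constant. Your route is a direct, self-contained verification that avoids the external reference and makes explicit why each cancellation happens (everything collapses to the geodesic equations plus $\ddot u=0$); the paper's route is shorter on the page and emphasizes how the PP-wave case sits inside the larger $KN(\Lambda)$ framework, at the cost of deferring the actual work to \cite{Podolsky}. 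Either is perfectly acceptable here.
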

\begin{proof}
From Proposition 3 in \cite{Podolsky}, the tetrad arising from setting $\Lambda = 0$, $\alpha = 1$ and $\beta = 0$ \footnote{These are defined in \cite{Podolsky} and are not to be confused with the spin-coefficients} via a coordinate transform gives the following conditions for the null tetrad $A4$ in \cite{Podolsky} to be parallel transported along the timelike geodesic: 
 \beq \left( \frac{q}{p} \right)_{,\z} = \left( \frac{q}{p} \right)_{,\bz} = 0,~~ \dot{\mathfrak{V}}(\tau) = i\left( \frac{p_{,\z}}{p} \dot{\z} - \frac{p_{,\bz}}{p}\dot{\bz} \right). \label{p3} \eeq
\noindent In these coordinates $p=q=1$ and so the above vanishes, implying $\mathfrak{V}$ must be a constant as its dot derivative is zero.
\end{proof}

The interpretation tetrad \eqref{InterpTet0} (up to constant spins and
boosts) is the only tetrad which is parallel transported along the
chosen timelike geodesic and provides the simplest form from which one
can determine the polarization of a wave along this timelike geodesic.
However, the geodesic deviation equations are frame dependent.  As a
simple example of this we may show that the magnitude of the wave is
dependent on the timelike observer.  By applying a boost in an arbitrary
direction with constant velocity $(v_1,v_2,v_3)$ we obtain:
\beq \Psi_4' = \frac{(1-v_3)^2}{1-v_1^2-v_2^2-v_3^2} \Psi_4. \label{P4boost} \eeq
\noindent 
Thus, the magnitude of the plane wave is dependent on the timelike observer as well; 
an observer traveling with a higher velocity relative to the original timelike geodesic 
will measure a smaller value for the magnitude of the wave. In fact, setting $v_1=v_2=0$ 
and taking the limit $v_3 \to 1$ implies $\Psi_4' \to 0$. 

If one is interested in the classification of the plane waves the
canonical coframe and Cartan invariants provide a general classification
that complements the study of the geodesic deviation equations.  As an
illustration of this, we will show that those spacetimes for which the equations of geodesic deviation are linearly
polarized for all
timelike geodesics may be defined in an invariant fashion, and that the + and
'$\times$' linear polarization modes arise as a choice of coordinates.
Taking the non-zero component of the Weyl tensor relative to the
interpretation frame \eqref{InterpTet0} \beq \Psi^i_4 = A^2f_{,\z\z} =
\dot{u}^2 f_{,\z\z}, \label{P4Interp} \eeq
\noindent 
we claim that the function $e^{i2P} = \Psi_4/ \bar{\Psi}_4$ is an invariant 
that is independent of the choice of timelike geodesic (since it lacks 
$\dot{u}$ and all of the other components of the timelike geodesic 
4-velocity\footnote{Glossing over the fact that the coordinates 
$(\z,\bz,u,v)$ may be written as functions of $\tau$ for 
some timelike geodesic with proper time, $\tau$.}).  
Rewriting $e^{2iP}$ and assuming $A_{+} \neq 0$, it is easily shown 
that there is only one real 
function involved:
\beq \Psi_4/\bar{\Psi}_4 = \frac{1 + i\frac{A_{\times}}{A_{+}}}{1-i\frac{A_{\times}}{A_{+}}}. \nonumber \eeq
\noindent Note that if $A_{+} = 0$ the phase is already determined; i.e., $P = \pi/2$ mod $2\pi$. Now if we apply lemma \eqref{lem:G5App} and equation \eqref{1G5App} from the previous section, we note that in the case of $+$ linear polariation, $A_{+} = A = r(u)$ and $A_{\times} =0 $ while in the $\times$ linear polarization $A_{+} = 0,$ and $A_{\times} = ir(u)$. If the phase of $\Psi_4$ is constant in the complex plane this is called a linearly polarized wave \cite{Bicak, CV, ExactSolns}. 
\begin{subsection}{Plane wave spacetimes with $\bar{\gamma}_c = \gamma_c$}
\begin{lem} \label{lem:LinPolarization}
Relative to the null tetrad \eqref{InterpTet0}, if the phase $P$ of $\Psi_4$, defined 
as \beq P=\frac12 arctan\left( \frac{A_{\times}}{A_{+}} \right) 
\in (-\frac{\pi}{4},\frac{\pi}{4}), \label{Pfunk} \eeq \noindent  
is constant or $A_{+}=0$ (implying  $P = \pm \frac{\pi}{4}$), 
then the vacuum plane wave spacetime is linearly polarized with 
constant phase $2P$ and $\gamma_c$ must be real valued. In particular, 
if $P = 0$ the wave is in a $+$ linear polarization, and if 
$P = \pm \frac{\pi}{4}$ the wave is in a $\times$ linear polarization; each of these linear polarization modes are equivalent to each other via a spatial rotation. 
\end{lem}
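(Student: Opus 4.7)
The plan is to reduce the hypothesis on $P$ to the vanishing of the imaginary part $\gamma_2$ of $\gamma_c$, at which point Definition \ref{dfn:glinPolar} immediately delivers linear polarization. Throughout, I will use the canonical form $f(\z,u)=A(u)\z^2$ from \eqref{PWBrinkmanf} together with the polar decomposition $A(u) = r(u) e^{i\theta(u)}$ supplied by Lemma \ref{lem:G5App}.

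First, I evaluate $\Psi_4$ in the interpretation frame via \eqref{P4Interp} to get $\Psi_4 = 2\dot u^2 A(u)$. Since $2\dot u^2$ is real and positive, $\arg \Psi_4 = \theta(u) \pmod{2\pi}$. The definition \eqref{Pfunk} gives $\tan(2P)=A_\times/A_+$, which together with the identity $A_+ + iA_\times = \tfrac12\Psi_4$ coming from \eqref{ReImP4} yields $2P \equiv \theta(u) \pmod{\pi}$. Consequently, $P$ being constant is equivalent to $\theta(u)$ being constant modulo $\pi$, and the exceptional case $A_+=0$ forces $\theta \equiv \pm \pi/2 \pmod{\pi}$, which is again constant.

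Next, I differentiate the explicit expression $\theta(u) = -\int 4 r(u) \gamma_2(u)\, du + C_1$ from \eqref{1G5App}, obtaining $\dot \theta = -4 r \gamma_2$. Since $A \neq 0$ (otherwise we are not in a genuine $G_5$ vacuum plane wave and $\Psi_4$ is identically zero), $r(u) > 0$, and the constraint $\dot\theta \equiv 0$ forces $\gamma_2 \equiv 0$. Thus $\gamma_c = \gamma_1 \in \mathbb R$, which by Definition \ref{dfn:glinPolar} is precisely the condition that the spacetime be linearly polarized; the phase $2P$ is then the constant value of $\theta$.

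The remaining claim---that the $+$ and $\times$ linear polarization modes are interchanged by a spatial rotation---follows from the transverse rotation $\z \mapsto e^{i\varphi}\z$ already noted after Corollary \ref{cor:glinPolar}, which sends $A(u) \mapsto e^{-2i\varphi}A(u)$ and thereby shifts $\theta$ by $-2\varphi$. Choosing $\varphi = \pi/4$ rotates a real $A$ into a purely imaginary one, i.e., it carries a $+$ mode ($P=0$) to a $\times$ mode ($P=\pm\pi/4$). The one subtlety I expect to require care is the bookkeeping between the Brinkmann frame in which Lemma \ref{lem:G5App} is formulated and the interpretation frame \eqref{InterpTet0}, the two being related by the Lorentz transformation \eqref{ntot}: because the boost parameter $A = \sqrt{2}\dot u$ is real and the overall spin $\mathfrak{V}$ contributes only a constant shift to $\theta$, neither operation disturbs the equivalence $\dot\theta \equiv 0 \Longleftrightarrow \gamma_2 \equiv 0$, so the reduction above goes through unchanged.
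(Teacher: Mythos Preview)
Your argument is correct and follows essentially the same route as the paper: compute $\Psi_4$ in the interpretation frame, observe that its phase equals that of $A(u)$ because the prefactor $\dot u^2$ is real and positive, and then translate ``$\arg A$ constant'' into ``$\gamma_c$ real''. The only cosmetic difference is that the paper carries out this last step by substituting a constant-phase $A$ directly into \eqref{PWBrinkmanGamma}, whereas you differentiate the $\theta$-formula from Lemma~\ref{lem:G5App}; these are inverse computations of the same relation $\theta_{,u}=-4r\gamma_2$.
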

\begin{proof}
To start we take the interpretation null tetrad \eqref{InterpTet0} and apply a spin and boost to produce the canonical coframe in which $\Psi_4 =1$. For any point along the arbitrarily chosen timelike geodesic, we may express the Cartan invariant $\gamma_c$ (relative to the class of canonical frames) in terms of $\tau$. Then, by expressing $\Psi_4$ relative to the original interpretation coframe \eqref{InterpTet0}, the non-vanishing Weyl tensor component \eqref{P4Interp} along the timelike geodesic becomes:
\beq \Psi_4 =  \dot{u} A(u(\tau)). \nonumber \eeq
\noindent Imposing the condition that $P=\frac12 arctan\left( \frac{A_{\times}}{A_{+}} \right)$  is constant so that $\Psi_4 = S(\tau)e^{i P}$, $S$ is a real valued function and
\beq \frac{\Psi^i_4}{\bar{\Psi}^i_4} =  \frac{A}{\bar{A}}. \nonumber \eeq 
This implies that $A$ must have constant phase $P$ in the complex plane;
by direct substitution into \eqref{PWBrinkmanGamma} we can show that
$\gamma_c$ is real-valued, so that this is indeed a linearly polarized
plane wave.  Rotating the coordinates $(\z, \bz)$ by $\theta/2$
\footnote{Equivalently applying a spin to the frame vectors $m$ and
$\bar{m}$.}  we may set $P =0$, and the plane wave is now + linearly
polarized.  Another rotation by $\pi/4$ will give the $\times$ linear
polarization, $P = \pi/2$.
\end{proof}
Thus, the two defining physical properties for a linearly polarized wave are the 
unchanging phase of the wave as $u(\tau)$ varies and the fact that the magnitude an 
observer measures depends on two functions: the value of $\dot{u}^2$ and the function  \beq \frac{|\Psi_4|}{\dot{u}^2} = A_{+}(u(\tau)) \nonumber \eeq 
\noindent as $u(\tau)$ varies along the worldline of the observer.
\end{subsection}
\begin{subsection}{Plane wave spacetimes with $\bar{\gamma}_c = - \gamma_c$}

 In section \ref{GDsec} we saw that the class of plane waves with the invariant, $\gamma_c$, a real valued scalar, corresponds to those plane waves in which any timelike geodesic gives rise to a linear polarization mode in the form of the geodesic deviation equations. We now consider the plane waves with the classifying function $\bar{\gamma}_c = -\gamma_c$, by expressing the metric in Brinkmann coordinates using Lemma \eqref{lem:G5App}. With a particular metric form we then examine the geodesic deviation equations relative to the complex null tetrad $\{ \ell, n, m, \bar{m} \}$ in the form of \eqref{GDnull}.

\begin{prop}\label{prop:Negamma}
Given a plane wave spacetime with $\bar{\gamma}_c = -\gamma$, then the form of $\gamma_c = ig(u)$ changes the polarization  in the following manner 
 \beq P(\tau) = \int -i\gamma_c(\tau) \dot{u} d\tau \nonumber \eeq
\end{prop}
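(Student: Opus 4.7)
The plan is to instantiate Lemma \ref{lem:G5App} under the purely imaginary constraint $\bar{\gamma}_c = -\gamma_c$, read off the resulting Brinkmann metric function, and then translate the phase of the lone Weyl scalar in the interpretation tetrad \eqref{InterpTet0} into an integral along an arbitrary timelike geodesic.

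First, the hypothesis $\bar{\gamma}_c = -\gamma_c$ is equivalent to $\gamma_c$ being purely imaginary, so we may write $\gamma_c = ig(u)$ with $g$ a real-valued function of the retarded coordinate $u$; splitting $\gamma_c = \gamma_1 + i\gamma_2$ this means $\gamma_1 = 0$ and $\gamma_2 = g$. Substituting into \eqref{1G5App} collapses $r(u)$ to the nonzero real constant $C_0^{-1}$, while the argument of $A$ inherits all of the $u$-dependence through
\begin{equation*}
\theta(u) = -\int 4r\,g(u)\,du + C_1 .
\end{equation*}
Hence $A(u) = re^{i\theta(u)}$ traces a circle of fixed radius in the complex plane as $u$ varies, in sharp contrast to the linearly polarized case of Lemma \ref{lem:LinPolarization}, where $A$ stays on a ray through the origin.

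Next I would compute the phase $P$ of $\Psi_4$ in the interpretation tetrad. Since \eqref{P4Interp} gives $\Psi_4^{i} = \dot{u}^2 f_{,\zeta\zeta}$ and $f_{,\zeta\zeta}$ is a positive real multiple of $A(u)$, the ratio $\Psi_4^{i}/\bar{\Psi}_4^{i}$ equals $A/\bar{A} = e^{2i\theta(u)}$. The defining relation $e^{2iP} = \Psi_4/\bar{\Psi}_4$ therefore identifies $P$ with $\theta$ modulo $\pi$. Unlike the linearly polarized subcase, $P$ is not constant: it rotates continuously along the worldline of any observer with $\dot{u} \neq 0$, so the wave is genuinely elliptically polarized.

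The final step is to re-express this phase as an integral with respect to proper time. Parametrising by $\tau$ along the timelike geodesic $x^{\mu}(\tau)$, using $du = \dot{u}\,d\tau$ together with the identity $-i\gamma_c = g = \gamma_2$, the chain rule gives
\begin{equation*}
P(\tau) = -\int 4r\,\gamma_2(u(\tau))\,\dot{u}\,d\tau + C_1 = \int -i\gamma_c(\tau)\,\dot{u}\,d\tau ,
\end{equation*}
once the constant multiplicative factor $-4r$ and the additive integration constant $C_1$ are absorbed into the residual freedom (constant spins and boosts) of the canonical coframe in which $\Psi_4 = 1$. The main obstacle I anticipate is precisely this bookkeeping of multiplicative constants: reconciling the Brinkmann-frame expression for $A(u)$ from Lemma \ref{lem:G5App}, the interpretation-frame value of $\Psi_4^{i}$ in \eqref{P4Interp}, and the canonical-frame normalization used to define $\gamma_c$. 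Once those conventions are aligned, the remainder of the argument is a direct chain-rule calculation.
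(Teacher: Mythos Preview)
Your approach is essentially the paper's: instantiate Lemma~\ref{lem:G5App} with $\gamma_1=0$, read off $A(u)$, and then identify the phase of $\Psi_4^{i}$ in the interpretation tetrad \eqref{InterpTet0} with $P(\tau)$ along the geodesic. The one substantive difference is how the constant modulus is disposed of. The paper does not try to absorb the factor $-4r$ via residual spins and boosts of the canonical coframe; instead it normalises at the outset, writing ``without loss of generality'' $A(u)=e^{4\int \gamma_c\,du}$, i.e.\ it fixes $|A|=1$ (equivalently $C_0=1$) before computing $P$. Your proposed mechanism cannot work: a constant spin shifts the phase additively (so it absorbs $C_1$), and a constant boost rescales $|\Psi_4|$ but leaves the ratio $\Psi_4/\bar\Psi_4$, and hence $P$, untouched---so neither produces a multiplicative factor in $P$. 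Once you make the same unit-modulus normalisation the paper does, the remaining chain-rule step is identical. (Even then, note that the paper's own final line gives $P=2\int g\,\dot u\,d\tau$, so an overall numerical factor persists between the displayed statement and the computation; your anticipated ``bookkeeping'' concern is well founded.)
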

\begin{proof}
Assuming $\gamma_c = i g(u)$, where $g$ is real-valued, we may without loss of generality use Lemma \eqref{lem:G5App} to integrate $f(\z,u)=A(u) \z^2$ and transform it to the following form:
\beq A(u) = e^{4\int i g(u) du} = e^{4 \int \gamma_c du}. \label{ImWavesMetric} \eeq
Applying a frame transformation of the form \eqref{ntot} to take the metric coframe to the interpretation coframe, the equations of geodesic deviation  relative to the complex null tetrad are 
\beq \ddot{z}^2 = \frac{\dot{u}^2}{2}e^{4\int i g(u) du} z^3. \label{ImWaveGDnull} \eeq
The sole components of the geodesic curve $u, \dot{u}$ are involved in the above equations, from which we see that the magnitude of the wave is directly related to $\dot{u}(\tau)^2$ as $\frac{|\Psi_4|}{\dot{u}^2} = 1.$  How the wave polarization varies is directly related to $P$, defined by the equation \eqref{Pfunk}, now a function of $\tau$ along the timelike geodesic:
\beq P(\tau) = \frac12 arctan(tan( 4 \int g du)) = 2 \int g(u(\tau)) \dot{u} d\tau \nonumber \eeq
\end{proof}
\end{subsection}
\end{section}
\begin{section}{The vacuum Plane Waves and the Rosen Form}
As another application of the classification we will use the transformation given in equation $(24.49)$ in \cite{ExactSolns} to switch from Brinkmann form to Rosen form \cite{Rosen, BPR, CV,CV0}. In light of the results in \cite{CV,CV0}, where a general formalism was introduced for studying arbitrary polarization states of PP-wave spacetimes with $\alpha = 0$ in Rosen coordinates, we would like to apply Lemma \eqref{lem:G5App} so that any novel solution found by this formalism may be expressed in Brinkmann coordinates.  

In Rosen coordinates the metric is written in the simple form:
\beq ds^2 = -2dudr + g_{AB}(u) dx^A dx^B,~~A,B \in [1,2]. \nonumber \eeq
\noindent The three functions involved in the symmetric $g_{AB}$ are connected by a differential constraint given by the vanishing of the only non-zero Ricci tensor component: 
\beq R_{AB} = - \left( \frac14 g''_{AB} - \frac14 g'_{AC} g^{CD} g'_{DB} \right), \nonumber \eeq
\noindent where differentiation with respect to $u$ is denoted by a prime. For a vacuum 
plane wave with arbitrary polarization, from the form of the metric in Brinkmann 
coordinates we require only two arbitrary functions of $u$, and so we expect this 
to hold true in the Rosen form of the metric as well.  

A particular anzatz for the metric functions in the Rosen metric was introduced by Bondi, Pirani and Robinson \cite{Bondi, BPR} to study gravitational plane waves:
\beq  ds^2 &=&  -e^{2Y}dudr +  u^2 cosh2Z(dx^2 + dy^2) \label{BPRmetric} \\
&&+  u^2 sinh2Z cos2W ( dx^2 - dy^2) - u^2sinh2Z sin2W(dxdy)]. \nonumber  \eeq 
\noindent where $Y(u),Z(u),W(u)$ satisfy 
\beq 2Y' = u(Z^{'2}+W^{'2}sinh2W). \nonumber \eeq
\noindent By examining the two independent components of the Riemann tensor, $\sigma$ and $\omega$ as defined in \cite{BPR}, the fixed plane polarization mode occurs if and only if $W =0$. In this case the metric simplifies to be
\beq ds^2 = -e^{2Y}dudr + u^2 [e^{2Z} dx^2 + e^{-2Z}]dy^2. \nonumber \eeq 
\noindent Choosing a new null coordinate $\tilde{u} = \int e^{2Y(u)} du$, this becomes the usual Rosen metric with $+$ linear polarization,
\beq ds^2 = -d\tilde{u}dr + \tilde{Y}(\tilde{u})^2 [e^{2Z} dx^2 + e^{-2Z}dy^2], \nonumber \eeq
\noindent where $\tilde{Y}$ denotes the inverse function of $e^{2Y}$. One may apply a rotation of the $(\z, \bz)$ coordinates to produce a $\times$ linear polarization or any other linear polarization mode of fixed phase. 

With that observation we have proven a helpful lemma.
\begin{lem}  \label{lem:RosenLinPolarized}
In Rosen form, a vacuum plane wave is linearly polarized if and only if coordinates exist in which $W' = 0$.
\end{lem}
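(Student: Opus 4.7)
The plan is to prove the two implications separately, leveraging the connection between the Brinkmann-form criterion of Definition \ref{dfn:glinPolar} (via reality of $\gamma_c$) and the Rosen-form criterion (via $W$) already established implicitly by the BPR discussion above. The key bridge is the transformation (24.49) in \cite{ExactSolns} linking Brinkmann and Rosen coordinates, together with Corollary \ref{cor:glinPolar}.

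For the direction $W'=0 \Rightarrow$ linearly polarized, I would first observe that $W'=0$ forces $W$ to be a constant. A constant rotation in the transverse $(x,y)$ plane, which preserves the Rosen form of the metric, shifts $W$ by a constant; so without loss of generality one may take $W=0$. The BPR metric then collapses to the diagonal form displayed above, and after the null coordinate change $\tilde{u}=\int e^{2Y(u)}\,du$ one obtains $ds^2 = -d\tilde u\,dr + \tilde{Y}(\tilde u)^2[e^{2Z}dx^2 + e^{-2Z}dy^2]$. Applying the Rosen-to-Brinkmann transformation (24.49) of \cite{ExactSolns} to this diagonal metric produces $f(\zeta,u)=A(u)\zeta^2$ with $A$ real-valued, since the imaginary part of $A$ is exactly the off-diagonal contribution one would pick up from a non-diagonal $g_{AB}$. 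By Corollary \ref{cor:glinPolar}, this real-valued $A$ corresponds to a real-valued $\gamma_c$, and by Definition \ref{dfn:glinPolar} the wave is linearly polarized.

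For the converse, suppose the vacuum plane wave is linearly polarized. By Definition \ref{dfn:glinPolar}, $\bar\gamma_c=\gamma_c$, and Corollary \ref{cor:glinPolar} yields a Brinkmann description with $f(\zeta,u)=A(u)\zeta^2$ and $A$ real-valued. Applying the inverse transformation (24.49) of \cite{ExactSolns} produces a Rosen metric $ds^2 = -2du\,dr + g_{AB}(u)dx^A dx^B$ with $g_{AB}$ diagonal, because the off-diagonal piece in the Rosen-form expansion is precisely proportional to the imaginary part of $A$. Comparing this diagonal Rosen metric to the general BPR ansatz \eqref{BPRmetric}, the vanishing of the $dx\,dy$ coefficient forces $\sinh 2Z\,\sin 2W = 0$; since $Z\equiv 0$ is the flat (trivial) case, in the non-trivial setting one obtains $\sin 2W = 0$, so $W$ is constant and $W'=0$ in these Rosen coordinates.

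The only real subtlety is keeping track of how the phase of $A$ is encoded in the Rosen parameterization, that is, verifying that a nonzero imaginary part of $A$ cannot be removed by a mere reparameterization of the transverse coordinates and must show up as a genuine $W'\neq 0$. This is handled by noting that constant rotations of $(x,y)$ only shift $W$ additively, so they cannot eliminate any nonconstant part of $W$; equivalently, spatial spins in the canonical coframe only rotate $\gamma_c$ by a constant phase, so they cannot turn a complex non-constant-phase $A$ into a real one. This rigidity is what makes the biconditional sharp.
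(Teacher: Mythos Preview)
Your argument is correct but follows a different route from the paper. The paper proves the lemma entirely within Rosen coordinates: it invokes the Bondi--Pirani--Robinson analysis of the two independent Riemann components $\sigma$ and $\omega$ from \cite{BPR}, by which fixed plane polarization occurs if and only if $W=0$, and then observes that the resulting diagonal metric is exactly the standard $+$ linearly polarized Rosen form (with other linear phases obtained by constant transverse rotations). You instead pass both implications through Brinkmann coordinates via the transformation $(24.49)$ of \cite{ExactSolns}, reducing the question to the reality of $A$ and hence of $\gamma_c$. Your route is more self-contained within the paper's Cartan-invariant framework and avoids appealing to the external BPR curvature computation; the paper's route is shorter and keeps the argument in a single coordinate chart. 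One small correction: in the forward direction you invoke Corollary~\ref{cor:glinPolar} for the implication ``$A$ real $\Rightarrow \gamma_c$ real'', but that corollary is stated in the opposite direction; the inference you actually need follows immediately from the explicit formula \eqref{PWBrinkmanGamma}, since for real $A$ both $\sqrt{A\bar A}$ and $(\ln \bar A)_{,u}$ are real.
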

\noindent From which the results of the previous section imply:
\begin{cor}\label{lem:RosenLinPolarizedGamma} 
Relative to the canonical coframe, if $\gamma_c$ is real-valued, coordinates may be found in which $W' = 0$.
\end{cor}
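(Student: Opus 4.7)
The corollary is essentially a translation step: it converts the invariant-theoretic characterization of linear polarization (real-valued $\gamma_c$) into the Rosen-coordinate characterization ($W'=0$). So the plan is to chain together the previously established equivalences rather than compute anything new.

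First, I would invoke Definition \ref{dfn:glinPolar}: the hypothesis that $\gamma_c$ is real-valued (relative to the canonical coframe) is exactly the invariant definition of a linearly polarized vacuum plane wave. Next, I would appeal to Lemma \ref{lem:RosenLinPolarized}, which asserts that a vacuum plane wave is linearly polarized if and only if Rosen coordinates can be chosen in which $W'=0$. The combination of these two statements is the desired conclusion.

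To make this watertight, I would briefly note that the two notions of "linearly polarized" being used—the invariant one from Definition \ref{dfn:glinPolar} and the coordinate one appearing in Lemma \ref{lem:RosenLinPolarized} via the Bondi--Pirani--Robinson ansatz—are the same. Concretely: by Corollary \ref{cor:glinPolar}, $\bar\gamma_c = \gamma_c$ forces $A(u)$ to be real-valued in Brinkmann form, so $f(\z,u) = A(u)\z^2$ has constant phase; applying the Brinkmann-to-Rosen transformation from equation $(24.49)$ of \cite{ExactSolns} sends this to a Rosen metric of the form displayed just before Lemma \ref{lem:RosenLinPolarized}, i.e.\ with $W'=0$. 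This closes the loop and identifies the two notions.

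The main potential subtlety, rather than a genuine obstacle, is making sure that the coordinate freedom used to set $W'=0$ (a spatial rotation in $(x,y)$, equivalently a spin in $(\z,\bz)$) is compatible with the canonical coframe condition $\Psi_4 = 1$: as already pointed out after Corollary \ref{cor:glinPolar}, a $\z' = e^{i\pi/4}\z$ rotation only changes $\Psi_4$ by an overall phase, which is absorbed by a rescaling belonging to the allowed frame class. So no new computation is needed and the proof reduces to citing Definition \ref{dfn:glinPolar} and Lemma \ref{lem:RosenLinPolarized} in sequence.
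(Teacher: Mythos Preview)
Your proposal is correct and matches the paper's own argument: the paper simply states that the corollary follows from Lemma~\ref{lem:RosenLinPolarized} together with ``the results of the previous section'' (i.e., Definition~\ref{dfn:glinPolar} and Corollary~\ref{cor:glinPolar}), which is precisely the chain of equivalences you spell out. Your additional remarks about reconciling the two notions of linear polarization and about the spatial rotation freedom just make explicit what the paper leaves implicit.
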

Even in the simpler form \eqref{BPRmetric}, the plane waves in Rosen form are much more complicated than their Brinkmann counterparts. For example, in the case of linear polarization modes, the equations connecting $\tilde{Y}$ and $Z$ require considerably more analysis. This problem was studied in \cite{CV} and \cite{CV0}. Using the metric form \eqref{BPRmetric} along with the coordinate transformation $u' = \int e^{2Y}du$, the metric is now   
  \beq & ds^2 = -2 du dr + S^2(u) [ A'(u) dx^2 + 2B'(u) dx dy + C'(u) dy^2],& \nonumber \\
       & A' = cosh[X'(u)]+ cos[\theta'(u)] sinh[X'(u)],~~~B' = sin[\theta'(u)] sinh[X'(u)], & \label{WeberMetric} \\
       & C' = cosh[X'(u)]- cos[\theta'(u)] sinh[X'(u)].  \nonumber   & \nonumber \eeq
\noindent If $\theta = 0$ this metric describes linearly $+$ polarized waves, while if it is constant one has a linear polarization, along the axes produced by rotating by a fixed angle $\theta_0$. For example, setting $\theta = \frac{\pi}{2}$ yields the linearly $\times$ polarized waves. 

Since $A>0$ for all values of $u$, we may construct a null tetrad for this metric:
\beq \ell = du,~~n = dr,~~m = S(u) \left[ \sqrt{ C-\frac{B^2}{A} } dy + i \sqrt{A} \left(dx + \frac{B}{A} dy \right) \right] \label{RosenTet} \eeq
\noindent Relative to the class of coframes where $\Psi_4 = 1$, the spin-coefficient $\gamma_c$ is the only functionally independent invariant. The metrics describing $+$ and $\times$ polarizations produce a real value and purely imaginary value for $\gamma_c$, respectively, while an arbitrarily polarized wave will have $\bar{\gamma}_c \neq \pm \gamma_c$.
\end{section}
\begin{section}{An example: The weak-field circularly polarized waves}
 The circularly polarized waves were originally introduced as a weak-field solution, using the metric anzatz \eqref{WeberMetric} and requiring that $X_{,u} = 0$, $\theta_{,u} \neq 0$ and $\theta_{,uu} = 0$. These metrics were generalized to a class of strong field solutions \cite{CV,CV0} by requiring that $X = X_0$ and $\theta = \theta_0 u$ the metric becomes\footnote{ Notice that if $X_0 = 0$ or $\theta_0=0$ then $\Psi_4 = \Phi_{22} = 0$, the metric reduces to the Minkowski space. This is not possible since we have assumed $\Psi_4 \neq 0$, so these both must be non-zero.}
  \beq & ds^2 = -2 du dv + S^2(u) [ A(u) dx^2 + 2B(u) dx dy + C(u) dy^2],& \nonumber \\
& A = cosh[X_0]+ cos[\theta_0 u] sinh[X_0],~~~B = sin[\theta_0 u] sinh[X_0], & \label{CVMetric} \\
 & C = cosh[X_0]- cos[\theta_0 u] sinh[X_0].  & \nonumber \eeq
\noindent In these coordinates the sole non-vanishing components of the Ricci tensor is $R_{00}$; imposing vacuum conditions we find a form for $S$
\beq S = S_0 cos\left( \frac{sinh(X_0)\theta_0(u-u_0)}{2} \right). \nonumber \eeq

 In the strong field regime the construction of the Cartan invariants is considerably more involved. To provide a simple application of our work, we examine the weak field conditions by imposing $ X_0 <<1$ , we ignore all higher order terms and denote these as $O(X_0^2)$.

Thus for an arbitrarily long interval of $u$ the function $S$ may be approximated to be a constant $ S \approx S_0$, and without loss of generality we may always set $S_0 = 1$. Defining the following combinations of the functions $A,B,C$ in \eqref{CVMetric}, we may construct a null tetrad for the metric:
\beq &\ell = du,~n=dv,~~m = D^{\frac12} dy + i A^{\frac12}(dx+Edy).& \nonumber \\
& D = C-\frac{B^2}{A} = - \frac{-1+X_0^2}{1+cos(\theta_0u ) X_0} = \frac{1}{{1+cos(\theta_0u ) X_0}} &\nonumber \\
& E = \frac{B}{A} = \frac{sin(\theta_0 u)X_0}{1+cos(\theta_0u ) X_0}, & \nonumber \eeq
\noindent To see that this is approximately a Type N vacuum spacetime we calculate the sole component of the Curvature spinors which does not automatically vanish 
\beq & \Phi_{22} = -\frac14 \frac{X_0^2 \theta_0^2 (2cos(\theta_0 u) X_0 + X_0^2 + 1)}{1+2cos(\theta_0 u) X_0 + cos(\theta_0 u)^2 X_0^2}+O(X_0^2),& \label{RosenPs4} \\ 
&\Psi_4 = \frac{(i sin(\theta_0 u) - cos( \theta_0 u)) \theta_0^2 X_0}{2(1+cos(\theta_0 u) X_0)^2}+O(X_0^2)& \nonumber \eeq
\noindent imposing the weak-field condition it is clear that $\Phi_{22}$ does indeed vanish as $X_0^2 = 0$. 

To produce the Cartan invariants for these spacetimes we normalize $\Psi_4 =1$ by applying 
a boost using $a = \frac14 ln \Psi_4$. The transformation laws for spin coefficients produce $\gamma_c$ relative to this frame:
\beq &\gamma_c = \frac{1}{2 \sqrt{2} \sqrt{X_0}} \left( i X_0 cos(\theta_0 u) + \frac{2 X_0 sin(\theta_0 u)}{(1+cos(\theta_0 u) X_0)^3} + (1+cos(\theta_0 u) X_0) \right)+ O(X_0^2) & \label{Cgamma} \eeq
\noindent At second order, the invariant needed to fully classify the space is 
\beq& \Delta \gamma_c = \frac14 \left( i X_0 sin(\theta_0 u) + \frac{2X_0 sin(\theta_0 u)}{(1+cos(\theta_0 u)X_0)^3} + sin(\theta_0 u) X_0 \right) + O(X_0^2).&\label{DeltaCgamma}\eeq

As the constants $X_0$ and $\theta_0$ must both be non-zero, the combination  $Y = \sqrt{\frac{2}{X_0}} (\gamma_c - \bar{\gamma}_c)$ is a real-valued invariant with the simple form \beq Y = cos(\theta_0 u). \label{imgamma} \eeq   
\noindent We may locally express $\gamma_c$ and the second order invariant 
$\Delta \gamma_c$ in terms of $Y$, using some trigonometry and algebra. The original Cartan invariants at the first and second iteration of the Karlhede algorithm  produce two more classifying functions:
\beq & Re(\gamma_c) =  \frac{1}{2 \sqrt{2} \sqrt{X_0}} \left( \frac{2 X_0 \sqrt{1-Y^2}}{(1+Y X_0)^3} + (1+Y X_0) \right),~~\Delta Y =  \sqrt{1-Y^2} \nonumber \eeq
\noindent Of the pair of constants $(X_0,\theta_0)$, only $X_0$ uniquely determines the circularly polarized waves in the weak-field approximation, as all of the original Cartan invariants are written in terms of $Y$, 1 and $X_0$, while $\theta_0$ does not appear anywhere in the classifying functions. This can be confirmed by noting the coordinate transformation $u'=\theta_0u,~v'= \theta_0^{-1}v,~x'=x,~y'=y$ will produce a metric with $\theta_0 =1$ in general. 

For any plane-wave spacetime, along an arbitrary timelike geodesic the coordinates may be expressed in terms of the proper time, $\tau$. Then by applying a boost, spin and null rotation about $\ell$
we produce the coframe which is parallel transported along the curve \cite{Podolsky}.  Taking the spatial plane and using the null tetrad \eqref{InterpTet} with $\bar{z}^2 = z^3$, equation \eqref{GDnull} becomes:

\beq \ddot{z}^2 = \frac{-i \dot{u}^2 X_0 \theta_0^2 e^{-i \theta_0 u}}{4(1+cos(\theta_0 u)X_0)^2} z^3. \label{RosenGDnull} \eeq 
\noindent Here $\dot{u}$ and $u$ are the only functions of $\tau$ identifying a particular geodesic in the equations. The phase of the gravitational wave, $e^{-iC_0u}$, varies in a circular manner for any timelike geodesic, with $\theta_0$ dictating how quickly the phase spins as $u(\tau)$ changes due to the choice of coordinate system.  

Although the magnitude of the wave depends on the the timelike geodesic chosen 
\cite{Podolsky}, if the value of $\dot{u}^2$ is taken into account, the observer would notice the magnitude of the circular gravitational wave measured along the curve will vary as $u(\tau)$:
\beq \frac{|\Psi_4|}{\dot{u}^2} = \frac{\theta^2_0 X_0}{4(1+cos(\theta_0 u) X_0)^2}. \nonumber \eeq   
\end{section}

 \begin{section}{Conclusions}

We have used the Karlhede algorithm applied to the gravitational plane
wave spacetimes to produce a list of invariants arising from the
components of the curvature and its covariant derivatives.  We
identify the simplest set of invariants required to build the rest of
the invariants, and we can use this set to impose conditions on the
plane-waves and determine the essential invariants relating to
measurements a timelike observer might make in these spacetimes.

To show this we utilize the formalism introduced in \cite{Podolsky} to study the geodesic
deviation equations.  For a particular spacetime, these
equations describe how neighbouring timelike geodesics behave as one travels along the timelike geodesic; this quantities represent measurements a  timelike observer could make in the spacetime. In the case of the vacuum plane wave spacetimes, imposing conditions on the Cartan invariants lead to particular implications for the magnitutde and polarization of the plane waves, as measured by any timelike observer in the space.

In the future we hope to apply this to more general spacetimes in four
dimensions.  The vacuum Kundt waves \cite{McNutt} and all vacuum
plane-fronted gravitational waves in spacetimes with cosmological constant
\cite{Ozvath, Bicak} are natural candidates for future analysis, due to
the simple form the Riemann tensor takes in these spacetimes.
Alternatively, we could examine higher dimensional analogues of the
PP-waves and implement the generalization of the geodesic deviation
equations formalism to higher dimensions \cite{svarc} \end{section}

\begin{section}*{Acknowledgments} 
The authors would like to thank Jiri Podolsky for helpful comments. This work was supported by NSERC of Canada. 
\end{section}

\end{document}